\newtheorem{lemma}{Lemma} 
\begin{document}
%
\title{Beyond Diagonal RIS Enhanced Cognitive Radio Enabled Multilayer Non-Terrestrial Networks}

\author{Wali Ullah Khan, Chandan Kumar Sheemar, Eva Lagunas, Symeon Chatzinotas  \thanks{Authors are with the Interdisciplinary Centre for Security, Reliability, and Trust (SnT), University of Luxembourg, 1855 Luxembourg City, Luxembourg (emails: \{waliullah.khan, chandankumar.sheemar, eva.lagunas, symeon.chatzinotas\}@uni.lu).

}}%

\markboth{IEEE Conference
}
{Shell \MakeLowercase{\textit{et al.}}: Bare Demo of IEEEtran.cls for IEEE Journals} 

\maketitle

\begin{abstract}
Beyond diagonal reconfigurable intelligent surfaces (BD-RIS) have emerged as a transformative technology for enhancing wireless communication by intelligently manipulating the propagation environment. Its interconnected elements offer enhanced control over signal redirection, making it a promising solution for integrated terrestrial and non-terrestrial networks (NTNs). This paper explores the potential of BD-RIS in improving cognitive radio enabled multilayer non-terrestrial networks. We formulate a joint optimization problem that maximizes the achievable spectral efficiency by optimizing BD-RIS phase shifts and secondary transmitter power allocation while controlling the interference temperature from the secondary network to the primary network. To solve this problem efficiently, we decouple the original problem and propose a novel solution based on an alternating optimization approach. Simulation results demonstrate the effectiveness of BD-RIS in cognitive radio enabled multilayer NTNs.
\end{abstract}

\begin{IEEEkeywords}
Beyond diagonal RIS, cognitive radio network, multilayer non-terrestrial networks.
\end{IEEEkeywords}

\IEEEpeerreviewmaketitle

\section{Introduction}
The integration of non-terrestrial networks (NTNs) in 6G aims to provide seamless global coverage, ubiquitous connectivity, ultra-reliable communications, and enhanced network access in remote or underserved areas \cite{10716670}. Unlike traditional terrestrial networks, NTNs leverage satellites, high-altitude platform stations (HAPS), and unmanned aerial vehicles (UAVs) to extend network access beyond conventional limits \cite{sheemar2025joint,iacovelli2024holographic}. These multi-layer architectures enable robust wireless communication across diverse geographical landscapes, supporting emerging applications such as Internet of Things (IoT), vehicular networks, and disaster recovery systems \cite{10097680}. However, despite their vast potential, NTNs face several challenges, including spectrum scarcity, high propagation losses, dynamic channel conditions, and energy efficiency due to long-distance between transmitter and receiver.

To address these challenges, cognitive radio networks and beyond diagonal reconfigurable intelligent surfaces (BD-RIS) have emerged as key enablers for enhancing spectrum efficiency and improving network adaptability \cite{9514409}. Cognitive radio networks operate with a primary network and an underlaid secondary network. The users of the secondary network opportunistically access the spectrum of the primary network without endangering the service quality of its associated users \cite{10013700}. This dynamic spectrum access mechanism helps mitigate spectrum scarcity issues in NTNs, enabling more efficient and adaptive communication. Meanwhile, BD-RIS offers an advanced means of beamforming and interference suppression by reconfiguring the wireless environment through both diagonal and non-diagonal entries of the phase shift matrix \cite{9913356}. Using cognitive radio and BD-RIS, 6G NTNs can achieve higher spectral efficiency, improved energy efficiency, and more reliable communications in challenging environments.

Most of the existing literature considered BD-RIS in terrestrial networks. The works \cite{10587164,de2024semi} explored channel estimation approaches in BD-RIS assisted multiple-input multiple-output (MIMO) systems. The papers \cite{10834443,li2024beyond,10787237 } used BD-RIS to maximize the achievable rate/ spectral efficiency and energy efficiency of wireless systems. Moreover, \cite{10777522,10693959} investigated the performance of BD-RIS in integrated sensing and communication-based wireless systems. Furthermore, \cite{10817282,10817342} utilized BD-RIS in terahertz communication to extend the wireless coverage in indoor and outdoor environments.

Recently, few works have explored the potential of BD-RIS in NTNs. For instance, the work \cite{10716670} utilized BD-RIS to assist the communication of the LEO satellite. They optimized power allocation and phase shift design to maximize the spectral efficiency of the system. Another work \cite{khan2024integration} used a transmissive BD-RIS mounted UAV communication to support multiuser multicarrier communication. The authors optimized transmissive beamforming and power allocation to maximize the achievable data rate of the system. Similarly, \cite{khan2025transmissive} adopted transmissive BD-RIS mounted LEO communication to support IoT devices on the ground. They maximized the sum rate of the system through joint power allocation and phase shift design. Considering the available literature, research on BD-RIS in NTNs is still in its early stages and requires further exploration.

This work considers a new framework that considers cognitive radio-enabled multilayer NTNs, which consists of a primary LEO network and BD-RIS mounted secondary HAPS network. In particular, the proposed framework seeks to optimize BD-RIS phase shift and HAPS transmit power in the secondary network while ensuring the service quality of the primary LEO network by imposing the interference temperature constraint. To handle the non-convexity of the spectral efficiency problem, we employ alternating optimization, where power allocation is optimized using water-filling and KKT conditions, while phase shift is designed via Riemannian optimization. Numerical results are compared with conventional diagonal RIS configuration.

\emph{Paper Organization:} The rest of the paper is organized as follows. Section  \ref{sez_2} presents the system model and the problem formulation. Section \ref{sez_3} presents the solution. Finally, Section \ref{sez_4} and \ref{sez_5} present the simulation results and conclusions, respectively.

\begin{figure}[!t]
\centering
\includegraphics [width=.48\textwidth]{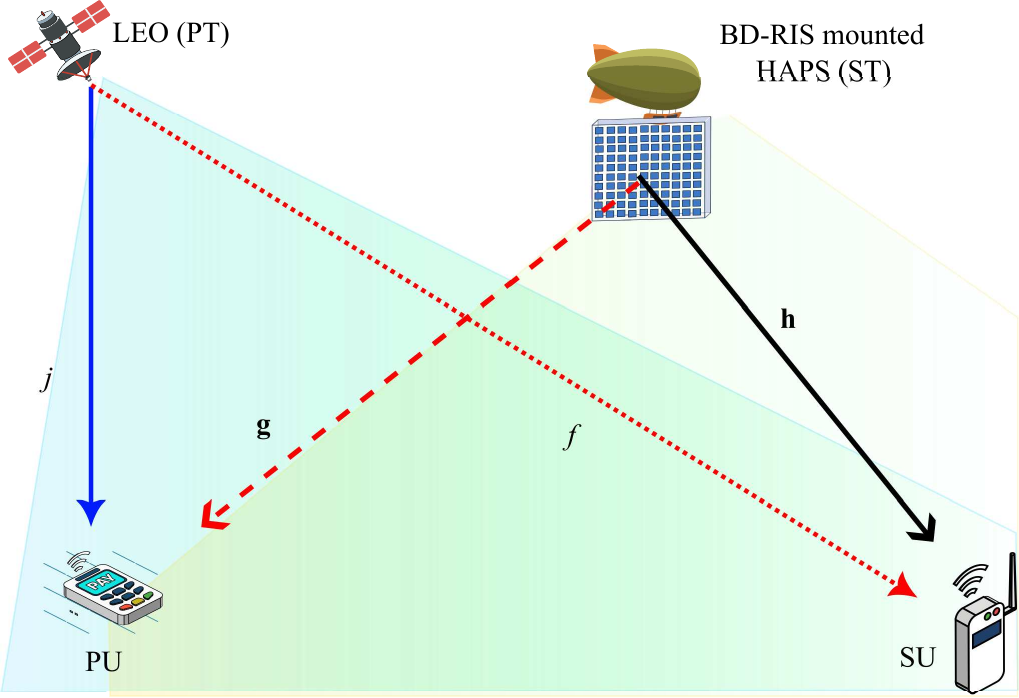}
\caption{System model of BD-RIS enhanced cognitive radio network.}
\label{CRNsm}
\end{figure}

\section{System Model and Problem Formulation} \label{sez_2}
Let us consider a cognitive radio-enabled multilayer NTN that consists of a primary network with the primary LEO transmitter (PT) serving a primary user (PU) and a secondary transmissive BD-RIS mounted HAPS transmitter (ST) serving a secondary user (SU) as shown in Fig. \ref{CRNsm}\footnote{Although we consider a single user in both primary and secondary networks, our framework can be extended to a multicarrier-multiuser scenario. For instance, future work could allocate one user per carrier, making the number of users equal to the number of carriers. All carriers would be exclusively reused in the secondary network, forming a cognitive network and improving spectral efficiency.}. BD-RIS consists of \(M\) reconfigurable elements and follows a fully-connected architecture with interconnected elements. The primary and secondary networks share the same spectrum to maximize spectral efficiency. Thus, the objective is to maximize the achievable spectral efficiency of the secondary HAPS network by optimizing the transmit power of the ST and the scattering matrix of the BD-RIS, subject to the interference temperature at the PU. We consider the HAPS to be in a nearly stationary position while hovering \cite{9515574}. Therefore, the channel vector $\mathbf{h}$ from ST to SU can be modeled as Rician fading such as:
\begin{equation}
\mathbf{h} = \sqrt{\frac{\hat{h}}{d^2}}\Bigg (\sqrt{\frac{K}{K+1}} \mathbf{h}_{\text{LoS}} + \sqrt{\frac{1}{K+1}} \mathbf{h}_{\text{NLoS}}\Bigg)\in \mathbb{C}^{M \times 1},\label{1}
\end{equation}
where \(K\) is the Rician factor, $\hat{h}$ denotes the channel power gain, $d$ shows the distance between ST and SU, \(\mathbf{h}_{\text{LoS}}\) represents the deterministic line-of-sight (LoS) component, and \(\mathbf{h}_{\text{NLoS}}\) follows a Rayleigh distribution. According to \cite{10839492}, the LoS component \(\mathbf{h}_{\text{LoS}}\) can be described as:
\begin{align}
&\mathbf{h}_{\text{LoS}}=[1,e^{-j\delta\sin{\theta}\cos{\varphi}},\dots,e^{-j\delta\sin{\theta}\cos{\varphi}(M_x-1)}]^T\nonumber\\& \otimes[1,e^{-j\delta\sin{\theta}\sin{\varphi}},\dots,e^{-j\delta\sin{\theta}\sin{\varphi}(M_y-1)}]^T
\end{align}
where \(\delta=(2\pi f_c q)/c\) with $q$ being the distance between adjacent phas shift elements, $f_c$ is the carrier frequency and $c$ shows the speed of light. Subsequently, the channel from ST to PU and PT to SU can be denoted as $\mathbf{g}\in \mathbb{C}^{M \times 1}$ and $f\in \mathbb{C}^{1 \times 1}$. Please note that $\mathbf{g}$ and $f$ follow Rician fading and are assumed to be modeled similarly as in (\ref{1}).

Given the aforementioned notations, we can express the received signal at the SU as

\begin{equation}
    y = \mathbf{h} \mathbf{\Phi}\sqrt{P_s} x +f\sqrt{Q_p}z+ n_{s},
\end{equation}
where \(x\) is the i.i.d unit-variance data stream transmitted from the ST, $z$ is the i.i.d unit-variance data stream transmitted from the PT, $\boldsymbol{\Phi}\in\mathbb{C}^{M_x\times M_y}$ denotes the phase shift matrix of transmissive BD-RIS such that $\boldsymbol{\Phi}\boldsymbol{\Phi}^H={\bf I}_M$, and $M_x$ and $M_y$ denote the number of elements along the x-axis )(rows) and y-axis (columns), respectively, and \(n_{s} \sim \mathcal{CN}(0, \sigma^2)\) is the additive white Gaussian noise (AWGN). Furthermore, $P_s$ and $Q_p$ represent the transmit power of ST and PT, respectively.
The achievable rate at the SU based on the received signal can be stated as
\begin{equation}
    R_{s} = \log_2 \left( 1 + \gamma_s \right),
\end{equation}
where $\gamma_s$ is the signal-to-interference plus noise ratio (SINR), given as
\begin{align}
\gamma_s=\frac{|\mathbf{h} \mathbf{\Phi}|^2 P_{s} }{\sigma^2+|f|^2Q_p}.\label{4}
\end{align}
To ensure the service quality of the primary network, the interference power at the PU is constrained by the interference temperature limit \(I_{\text{th}}\) with the following constraint
\begin{equation}
 |\mathbf{g} \mathbf{\Phi}|^2 P_{s} \leq I_{\text{th}}.
\end{equation}

The objective of this work is to maximize the achievable spectral efficiency of the secondary network, while controlling the interference temperature at the PU. Formally, the optimization problem for such purpose can be stated as
\begin{align}
 (\mathcal{P}) \quad   \max_{P_{s}, \mathbf{\Phi}} \quad & \log_2 \left( 1 + \gamma_s \right) \label{obj}\\
    \text{s.t.} \quad C_1:\ & |\mathbf{g} \mathbf{\Phi}|^2 P_{s} \leq I_{\text{th}}, \label{C1}\\
    C_2:\ & 0 \leq P_{s} \leq P_{\max},\label{C2} \\
   C_3:\ & \mathbf{\Phi}\mathbf{\Phi}^H = \textbf{I}_m,\label{C3}
\end{align}
where $C_1$ ensures the service quality of PU by controlling the interference temperature coming from ST, $C_2$ limits the transmit power of ST, and $C_3$ denotes the BD-RIS constraint on its phase-shift response. 

It is noteworthy that this is a highly non-convex optimization problem due to the multiplicative coupling between $P_s$ and $\boldsymbol{\Phi}$. Therefore, finding its global optimum is extremely challenging. Furthermore, as similar problems have not yet been explored in the literature, this drives the quest for novel algorithm design highly tailored for this scenario.


\section{Proposed Solution}  \label{sez_3}
To solve the optimization problem in (\ref{obj}), we resort to an alternating optimization framework. Namely,  we first decouple the problem into two manageable subproblems, i.e., transmit power optimization for ST, and phase shift design for BD-RIS. To solve the first problem, the wate-filling method is tailored to this challenging problem to maximize the objective function. Subsequently,  semi-definite relaxation for phase shift design for BD-RIS is adopted. 

\subsection{Optimal Power allocation}
In this part, we assume the phase-shift response of the BD-RIS to be fixed and consider finding the optimal power allocation. For such a purpose, the original optimization problem can be  expressed as
\begin{align} \label{ref_power}
 (\mathcal{P}_1)   \max_{P_{s}} \quad & \log_2 \left( 1 + \gamma_s \right) \\
    \text{s.t.} \quad C_1:\ & |\mathbf{g} \mathbf{\Phi}|^2 P_{s} \leq I_{\text{th}}\\
   C_2:\ &  0 \leq P_{s} \leq P_{\max}.
\end{align}
Note that finding the optimal \(P_s\) is not an easy task. To proceed, we first state the following result.

\begin{lemma}
    Given the BD-RIS response fixed from previous iteration, the objective function \eqref{ref_power} is concave with respect to the power $P_{s}$.
\end{lemma}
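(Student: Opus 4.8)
The plan is to exploit that, once the BD-RIS matrix $\mathbf{\Phi}$ is frozen at its value from the previous iteration, the SINR $\gamma_s$ in \eqref{4} becomes affine in the scalar $P_s$, after which concavity follows from elementary calculus. First I would absorb every quantity that is independent of the optimization variable into a single constant. Since $\mathbf{\Phi}$, $\mathbf{h}$, $f$, $Q_p$ and $\sigma^2$ are all fixed, I define
\[
c = \frac{|\mathbf{h}\mathbf{\Phi}|^2}{\sigma^2 + |f|^2 Q_p},
\]
which is strictly positive because the numerator is a squared magnitude and the denominator is the noise power plus a nonnegative interference term. With this substitution the objective in \eqref{ref_power} reads $R_s(P_s) = \log_2(1 + c P_s)$.

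Next I would establish concavity through the second-derivative test, which is the cleanest self-contained route. Differentiating once gives $R_s'(P_s) = \frac{1}{\ln 2}\,\frac{c}{1 + c P_s}$, and differentiating a second time yields
\[
R_s''(P_s) = -\frac{1}{\ln 2}\,\frac{c^2}{(1 + c P_s)^2}.
\]
Because $c > 0$ and $1 + c P_s > 0$ over the feasible range $P_s \geq 0$ fixed by $C_2$, the second derivative is strictly negative throughout the domain, which establishes strict concavity of the objective in $P_s$.

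An equally valid alternative, which I would note as a cross-check, is a composition argument: the map $P_s \mapsto 1 + c P_s$ is affine, the base-two logarithm is concave and nondecreasing, and concavity is preserved when a concave nondecreasing function is composed with an affine one. Either route delivers the claim. There is no genuine obstacle in this lemma; the single point that deserves a line of care is confirming that $c$ is truly constant and positive once $\mathbf{\Phi}$ is held fixed, since it is exactly this freezing that removes the multiplicative coupling between $P_s$ and $\mathbf{\Phi}$ responsible for the non-convexity of the full problem $(\mathcal{P})$.
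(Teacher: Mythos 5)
Your proof is correct and follows essentially the same route as the paper: both compute the first and second derivatives of $\log_2(1+\gamma_s)$ with $\mathbf{\Phi}$ fixed and conclude concavity from the non-positive second derivative. Your version is slightly tidier by absorbing the fixed quantities into a single constant $c$ and noting the affine-composition rule as a cross-check, but the substance is identical.
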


\begin{proof}
    To prove this result we proceed as follows. Note that the objective function can be recalled here again as:
\begin{equation}
    f(P_s) = \log_2 \left( 1 + \gamma_s \right),
\end{equation}
where the SINR is defined as
\begin{equation}
    \gamma_s = \frac{|\mathbf{h} \mathbf{\Phi}|^2 P_s}{\sigma^2 + |f|^2 Q_p}.
\end{equation}
Substituting \(\gamma_s\) into \(f(P_s)\), we obtain
\begin{equation}
    f(P_s) = \log_2 \left( 1 + \frac{|\mathbf{h} \mathbf{\Phi}|^2 P_s}{\sigma^2 + |f|^2 Q_p} \right).
\end{equation}

To prove concavity, we compute the first and second derivatives of \(f(P_s)\). Using the chain rule, the first derivative can be expressed as:
\begin{align}
    \frac{df}{dP_s} &= \frac{1}{\ln(2)} \cdot \frac{|\mathbf{h} \mathbf{\Phi}|^2}{1 + \frac{|\mathbf{h} \mathbf{\Phi}|^2 P_s}{\sigma^2 + |f|^2 Q_p}} \cdot \frac{1}{\sigma^2 + |f|^2 Q_p} \nonumber \\
    &= \frac{|\mathbf{h} \mathbf{\Phi}|^2}{\ln(2) \left( |\mathbf{h} \mathbf{\Phi}|^2 P_s + \sigma^2 + |f|^2 Q_p \right)}.
\end{align}
Taking the derivative again, we obtain:
\begin{align}
    \frac{d^2 f}{dP_s^2} &= -\frac{|\mathbf{h} \mathbf{\Phi}|^4}{\ln(2) \left( |\mathbf{h} \mathbf{\Phi}|^2 P_s + \sigma^2 + |f|^2 Q_p \right)^2}.
\end{align}
It can be observed that \( |\mathbf{h} \mathbf{\Phi}|^4 \), \( \ln(2) \), and \( \left( |\mathbf{h} \mathbf{\Phi}|^2 P_s + \sigma^2 + |f|^2 Q_p \right)^2 \) are all positive, it follows that:
\begin{equation}
    \frac{d^2 f}{dP_s^2} \leq 0, \quad \forall P_s \geq 0.
\end{equation}
Since the second derivative is non-positive for all \( P_s \geq 0 \), thus the function \( f(P_s) = \log_2(1 + \gamma_s) \) is concave in \( P_s \).
\end{proof}
Since the objective function is concave in \(P_s\), we can find the optimal power allocation based on the water-filling principle. Using the effective SINR in (\ref{4}), we take the derivative of the objective function and set the gradient to zero. This leads to the following solution satisfying the Karush-Kuhn-Tucker (KKT) condition

\begin{equation}  
    P_s^* = \min \left( \left[ \frac{1}{\lambda} - \frac{\sigma^2+|f|^2 Q_p}{|\mathbf{h} \mathbf{\Phi}|^2} \right]^+, P_{\max} \right).\label{p}
\end{equation}
A mathematical proof demonstrating this result is provided in Appendix A. In \eqref{p}, \(\lambda\) denotes the Lagrange multiplier associated with the interference constraint and its optimal value is given as
\begin{equation}
    \lambda = \frac{|\mathbf{g} \mathbf{\Phi}|^2}{I_{\text{th}}}.
\end{equation}
 Thus, the optimal \(P_s^*\) is determined by substituting \(\lambda\) and ensuring the interference constraint holds.
\subsection{BD-RIS Optimization}
Subsequently, given the \(P^*_s\), we optimize the BD-RIS phase shift matrix \(\mathbf{\Phi}\). This optimization problem can be formulated as

\begin{align}
 (\mathcal{P}_2) \quad  \max_{\mathbf{\Phi}} \quad & \log_2 \left( 1 + \frac{|\mathbf{h} \mathbf{\Phi}|^2 P_{s} }{\sigma^2+|f|^2Q_p} \right) \\
    \text{s.t.} \quad & |\mathbf{g} \mathbf{\Phi}|^2 P_{s} \leq I_{\text{th}}, \\
    & \mathbf{\Phi} \mathbf{\Phi}^H = \mathbf{I}_M.
\end{align}
We can see that the unitary constraint \(\mathbf{\Phi} \mathbf{\Phi}^H = \mathbf{I}_M\) in $(\mathcal{P}_2)$ is non-convex, we employ Riemannian optimization to iteratively update \(\mathbf{\Phi}\) on the Stiefel manifold.
To ensure feasibility, we introduce the Lagrange multiplier \(\mu\) for the interference constraint as:
\begin{equation}
    \mathcal{L}(\mathbf{\Phi}, \mu) = f(\mathbf{\Phi}) - \mu \left(|\mathbf{g} \mathbf{\Phi}|^2 P_{s} - I_{\text{th}}\right).
\end{equation}
where $f(\mathbf{\Phi})=\frac{|\mathbf{h} \mathbf{\Phi}|^2 P_{s} }{\sigma^2+|f|^2Q_p}$. Since \(\mathbf{\Phi}\) lies on the Stiefel manifold, we project the Euclidean gradient onto the tangent space. The Euclidean gradient of \(\mathcal{L}(\mathbf{\Phi}, \mu)\) can be expressed as:
\begin{equation}
    \nabla_{\mathbf{\Phi}} \mathcal{L} = \frac{2 P_s}{\ln(2)} \mathbf{h}^H \mathbf{h} \mathbf{\Phi} - 2 \mu P_s \mathbf{g}^H \mathbf{g} \mathbf{\Phi}.
\end{equation}
The Euclidean gradient is used to compute the Riemannian gradient, which lies on the tangent space of the Stiefel manifold. Next, the Riemannian gradient is obtained using the following projection such as:
\begin{equation}
    \text{Proj}_{\mathbf{\Phi}}(\nabla_{\mathbf{\Phi}} \mathcal{L}) = \nabla_{\mathbf{\Phi}} \mathcal{L} - \mathbf{\Phi} \left(\mathbf{\Phi}^H \nabla_{\mathbf{\Phi}} \mathcal{L} + \nabla_{\mathbf{\Phi}} \mathcal{L}^H \mathbf{\Phi} \right)/2.
\end{equation}
This projection ensures that the gradient update respects the unitary constraint $\mathbf{\Phi} \mathbf{\Phi}^H = \mathbf{I}_M$. The term $\mathbf{\Phi} \left(\mathbf{\Phi}^H \nabla_{\mathbf{\Phi}} \mathcal{L} + \nabla_{\mathbf{\Phi}} \mathcal{L}^H \mathbf{\Phi} \right)/2$ removes the component of the gradient that points outside the tangent space. Note that Riemannian gradient is used to update $\boldsymbol{\Phi}$ iteratively.
Next, we use a step size \(\eta_k\) to updated the phase shift matrix of BD-RIS as:
\begin{equation}
    \mathbf{\Phi}_{k+1} = \mathbf{\Phi}_k - \eta_k \cdot \text{Proj}_{\mathbf{\Phi}}(\nabla_{\mathbf{\Phi}} \mathcal{L}).
\end{equation}
After the update, $\boldsymbol{\phi}_{k+1}$ may no longer satisfy the unitary constraint, so we enforce it using singular value decomposition (SVD). It can be expressed as:
\begin{equation}
    \mathbf{\Phi} = \mathbf{U} \mathbf{V}^H, \quad \text{where } \mathbf{U} \mathbf{\Sigma} \mathbf{V}^H = \text{SVD}(\mathbf{\Phi}).
\end{equation}
The SVD of $\boldsymbol{\phi}_{k+1}$ decomposes it into $\mathbf{U} \mathbf{\Sigma} \mathbf{V}^H$, where $\mathbf{U}$ and $\mathbf{V}$ are unitary matrices, and $\mathbf{\Sigma}$ is a diagonal matrix of singular values. By setting $ \mathbf{\Phi} = \mathbf{U} \mathbf{V}^H$, we ensure that $\boldsymbol{\phi}_{k+1}$ is unitary.

The algorithm iterates until the change in \(\mathbf{\Phi}\) is below a predefined threshold \(\epsilon\):
\begin{align}
\|\mathbf{\Phi}_{k+1} - \mathbf{\Phi}_k\|_F < \epsilon.
\end{align}
where \(\|\cdot\|_F\) is the Frobenius norm, which measures the difference between \(\mathbf{\Phi}_{k+1}\) and \(\mathbf{\Phi}_k\). If the change is small, the algorithm has converged to a stationary point. Once convergence is achieved, the algorithm outputs the optimized phase shift matrix \(\mathbf{\Phi}^*\).

\begin{figure}[!t]
\centering
\includegraphics [width=.42\textwidth]{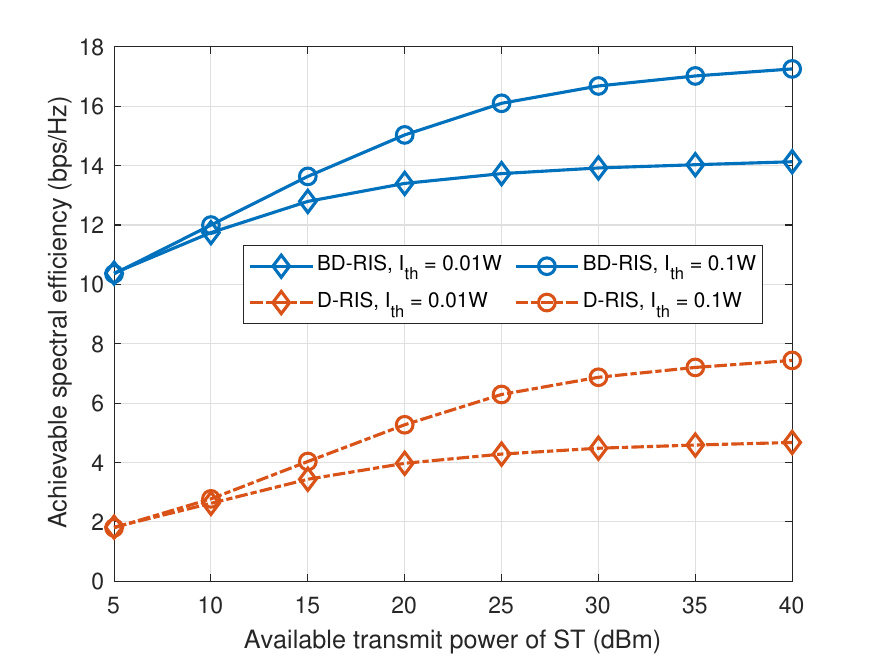}
\caption{Varying allocated power of ST versus achievable spectral efficiency of secondary network, where $M=32$ and $Q_{p}=50$ dBm.}
\label{Power1}
\end{figure}
\section{Numerical Results}  \label{sez_4}
This section provides the numerical results of the proposed framework. We compare the proposed BD-RIS mounted secondary HAPS network with a benchmark D-RIS mounted framework. Unless discussed otherwise, the simulation parameters are set as: the transmit power of HAPS is $P_s=30$ dBm, the transmit power of primary LEO is $Q_p=40$ dBm, the number of reconfigurable elements is $M=32$, Rician factor is $K=10$, and the number of Monte Carlo trials is 1000.

Fig. \ref{Power1} illustrates the achievable spectral efficiency of the secondary network versus the available transmit power of ST under two interference temperature thresholds ($I_{\text{th}}$) for BD-RIS and D-RIS configuration. For $I_{\text{th}} = 0.01$W, the performance is constrained as the ST quickly reaches the interference limit, resulting in a nearly flat spectral efficiency curve at higher transmit power levels. In contrast, for $I_{\text{th}} = 0.1$W, the higher threshold allows the ST to transmit with greater power, leading to significantly improved spectral efficiency. Additionally, BD-RIS outperforms D-RIS across all scenarios due to its interconnected elements, highlighting its superiority in maximizing spectral efficiency while adhering to interference constraints.

Fig. \ref{RIS1} shows the achievable spectral efficiency versus the number of reconfigurable elements in BD-RIS and D-RIS systems under two interference temperature thresholds ($I_{\text{th}}$). For both $I_{\text{th}} = 0.01$W and $I_{\text{th}} = 0.1$W, increasing the number of elements enhances the spectral efficiency due to improved beamforming and signal focusing capabilities. However, the performance gap between the two thresholds persists, with $I_{\text{th}} = 0.1$W enabling significantly higher efficiency because the ST can operate at higher transmit power without violating the interference constraint. Moreover, BD-RIS consistently outperforms D-RIS across all element values, demonstrating the advantage of BD-RIS in effectively leveraging the non-diagonal entries in the phase shift matrix to enhance spectral efficiency and manage interference, especially under high interference temperature thresholds.
\begin{figure}[!t]
\centering
\includegraphics [width=.42\textwidth]{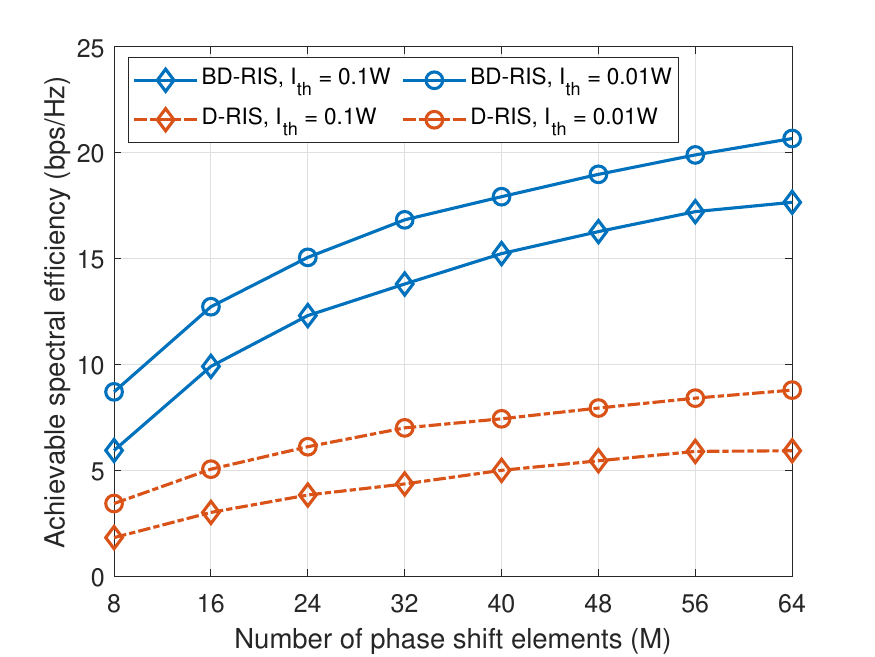}
\caption{Vaying number of BD-RIS elements versus achievable spectral efficiency of secondary network, where $P_s=30$ and $Q_{p}=40$ dBm.}
\label{RIS1}
\end{figure}
\begin{figure}[!t]
\centering
\includegraphics [width=.24\textwidth]{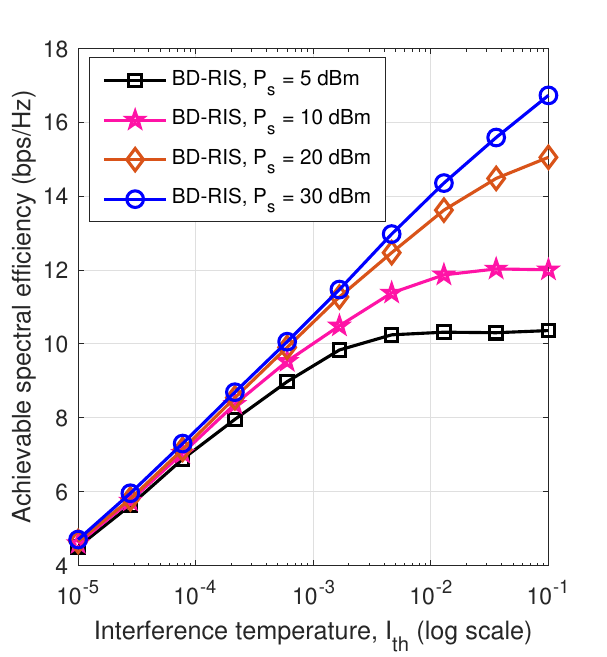}
\includegraphics [width=.24\textwidth]{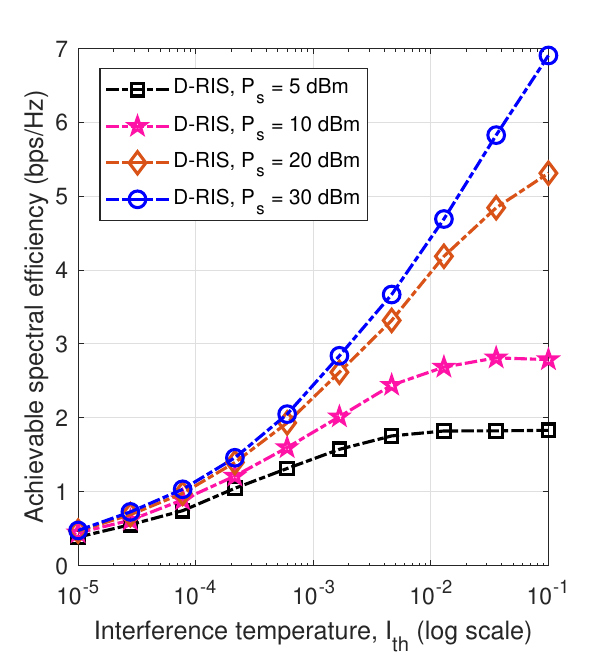}
\caption{Achievable spectral efficiency of secondary network versus the varying $I_{th}$, for the BD-RIS and D-RIS, where $M=32$ and $Q_{p}=40$ dBm.}
\label{Ith}
\end{figure}

Fig. \ref{Ith} shows the achievable spectral efficiency of the secondary network as a function of the interference temperature threshold $I_{\text{th}}$ in a BD-RIS-assisted system, with varying transmit power $P_s$ of the ST. As $I_{\text{th}}$ increases, the spectral efficiency improves because the ST can transmit at higher power levels without violating interference constraints. However, for $P_s=5$ dBm, the spectral efficiency initially rises with $I_{\text{th}}$ but plateaus at higher thresholds. This saturation occurs because the ST’s limited power budget restricts further improvement even when $I_{\text{th}}$ allows for higher transmission. In contrast, for higher $P_s$ values (e.g., 20 dBm or 30 dBm), the spectral efficiency increases significantly with increasing values of $I_{\text{th}}$, demonstrating the combined impact of a higher power budget and a relaxed interference limit. This trend highlights the crucial interplay between $I_{\text{th}}$ and $P_s$ in achieving optimal spectral efficiency.

\section{Conclusion}  \label{sez_5}
This paper introduced a novel framework for cognitive radio-based multilayer NTNs, where BD-RIS mounted HAPS operates as a secondary network underlying a primary LEO network. The main objective was to optimize power allocation for secondary transmitters and phase shift design for BD-RIS in the secondary network while ensuring the quality of service for the primary network. The non-convex problem of maximizing achievable spectral efficiency was handled using an alternating optimization approach. Specifically, the joint optimization problem was decomposed into two subproblems: power allocation and phase shift design. The power allocation problem was solved using the water-filling algorithm and KKT conditions, while the phase shift design was tackled through Riemannian optimization. Simulation results demonstrate that the proposed cognitive radio-based multilayer NTNs with BD-RIS significantly enhance system performance and outperform conventional systems with diagonal RIS configurations.

\section*{Appendix A: Derivation of Optimal Power Allocation Using KKT Conditions}

To obtain the optimal transmit power \( P_s^* \), we first write Lagrangian of our objective function as

\begin{equation}
    \mathcal{L}(P_s, \lambda, \mu) = \log_2 \left( 1 + \gamma_s \right) 
    - \lambda \left( |\mathbf{g} \mathbf{\Phi}|^2 P_{s} - I_{\text{th}} \right)
    - \mu P_s.
\end{equation}
Now we employ Karush-Kuhn-Tucker (KKT) conditions for optimality where stationary condition can be written as
   \begin{equation}
       \frac{d\mathcal{L}}{dP_s} = \frac{|\mathbf{h} \mathbf{\Phi}|^2}{\ln(2) \left( |\mathbf{h} \mathbf{\Phi}|^2 P_s + \sigma^2+|f|^2 Q_p \right)} - \lambda |\mathbf{g} \mathbf{\Phi}|^2 - \mu = 0.
   \end{equation}
and the complementary slackness conditions are given as:
   \begin{equation}
       \lambda \left( |\mathbf{g} \mathbf{\Phi}|^2 P_s - I_{\text{th}} \right) = 0, \quad \mu P_s = 0.
   \end{equation}
The primal feasibility can be given as:
   \begin{equation}
       0 \leq P_s \leq P_{\max}, \quad |\mathbf{g} \mathbf{\Phi}|^2 P_s \leq I_{\text{th}}, \quad \lambda \geq 0, \quad \mu \geq 0.
   \end{equation}
Solving for \( P_s^* \), we Rearrange the stationarity condition as:
\begin{equation}
    P_s = \frac{1}{\lambda \ln(2) |\mathbf{g} \mathbf{\Phi}|^2} - \frac{\sigma^2+|f|^2 Q_p}{|\mathbf{h} \mathbf{\Phi}|^2}.
\end{equation}
Subsequently, we use the complementary slackness condition and solve for \( \lambda \) as:
\begin{equation}
    \lambda = \frac{|\mathbf{g} \mathbf{\Phi}|^2}{I_{\text{th}}}.
\end{equation}
Substituting \( \lambda \) back, we obtain the water-filling solution as:

\begin{equation}
    P_s^* = \min \left( \left[ \frac{1}{\lambda} - \frac{\sigma^2+|f|^2 Q_p}{|\mathbf{h} \mathbf{\Phi}|^2} \right]^+, P_{\max} \right).
\end{equation}
where \( [x]^+ = \max(0, x) \) ensures non-negativity. Thus, the optimal power follows a water-filling strategy, where power is allocated based on the interference constraint and available channel gain. The value of \( \lambda \) is determined by ensuring that the interference power at the PU does not exceed \( I_{\text{th}} \).

\bibliographystyle{IEEEtran}
\bibliography{Wali_EE}

\end{document}